\setlist{nolistsep}
\newcommand{\multiline}[1]{%
  \begin{tabularx}{\dimexpr\linewidth-\ALG@thistlm}[t]{@{}X@{}}
    #1
  \end{tabularx}
}
\newtheorem{theorem}{Theorem}
\theoremstyle{definition}
\newtheorem{definition}{Definition}
\theoremstyle{remark}
\newtheorem{example}{Example}
\title{An Improved Viterbi Algorithm for a Class of Optimal Binary Convolutional Codes}
\author{Zita Abreu, Julia Lieb, Michael Schaller}
\date{}
\begin{document}

\maketitle
\begin{abstract}
The most famous error-decoding algorithm for convolutional codes is the Viterbi algorithm. In this paper, we present a new reduced complexity version of this algorithm which can be applied to a class of binary convolutional codes with optimum column distances called $k$-partial simplex convolutional codes.
\end{abstract}

\section{Introduction}
The classical way of dealing with errors during data transmission over some communication channel have been linear block codes, which are vector spaces over some finite field $\mathbb F_q$. Convolutional codes as modules over $\mathbb F_q[z]$ are a generalization of linear block codes to the polynomial setting. These codes are 
often used in erasure channels such as the Internet; for an overview see \cite{johannesson1999, lieb2021}.
Also different error-decoding algorithms for convolutional codes have been developed over time, see e.g.\cite{lin2004error}. The most prominent of them is the Viterbi algorithm, introduced in \cite{viterbi}.
It is a maximum-likelihood algorithm, however it has the significant disadvantage that its complexity grows exponentially with the code memory.

Recently, in \cite{isit23}, a new construction of binary convolutional codes with optimal column distances was presented. These codes are very attractive as they are capable of correcting a maximal number of errors per time interval. 
However, nothing was said regarding the decoding of these codes.
They are built from partial simplex codes, which are a generalization of first-order Reed-Muller codes. This allows us to generalize a decoding algorithm for Reed-Muller codes and combine it with the Viterbi algorithm to obtain a new decoding algorithm targeted to these optimal codes and resulting in significantly reduced computational complexity compared to the original Viterbi algorithm.

The outline of the paper is as follows.
We provide definitions and basic results for convolutional codes in Section \ref{sec:Convcodes}. Section \ref{sec:DecodingofConvolutional Codes} describes the Viterbi algorithm. Section \ref{sec:DecodingofHadamard} explains how to decode first-order Reed-Muller codes, while Section \ref{sec:DecodingofPartialSimplexCodes} describes how this can be generalized to decode partial simplex codes. In Section \ref{sec:NewDecodingAlgorithm}, we describe the new algorithm for decoding the aforementioned optimal binary convolutional codes. Finally, in Section \ref{sec:ComplexityAnalysis}, we assess the complexity of the algorithm presented in the previous section.

\section{Convolutional Codes} \label{sec:Convcodes}

Denote by $\mathbb{F}_{q}[z]$ the ring of polynomials over the finite field with $q$ elements $\mathbb{F}_{q}$.

\begin{definition}
For $k,n\in\mathbb N$ with $k\leq n$, an $(n,k)$ \textbf{convolutional code} $\mathcal{C}$ is defined as $\mathbb{F}_{q}[z]$-submodule of $\mathbb{F}_{q}[z]^n$ of rank $k$. A matrix $G(z)\in \mathbb{F}_{q}[z]^{k \times n}$ 
such that
$$
\mathcal{C} 
= \{{v(z) \in \mathbb{F}_{q}[z]^{n}: v(z) = u(z)G(z) \text{ with } u(z) \in \mathbb{F}_{q}[z]^{k}\}.}\nonumber
$$
is called a \textbf{generator matrix} or \textbf{encoder} for $\mathcal{C}$.
The \textbf{degree} $\delta$ of $\mathcal{C}$ is defined as the maximal degree of the $k\times k$ minors of any generator matrix $G(z)$ of $\mathcal{C}$. We call $\mathcal{C}$ an $(n,k,\delta)$ convolutional code.
\end{definition}

\begin{definition}
Consider $G(z)=\sum_{i=0}^{\mu}G_i z^{i}\in\mathbb F_q[z]^{k\times n}$ with $G_{\mu} \neq 0$. For each $i$, $1\leq i\leq k$, the $i$-th \textbf{row degree} $\nu_i$ of $G(z)$ is defined as the largest degree of any entry in row $i$ of $G(z)$.
If the sum of row degrees of $G(z)$ is equal to the degree $\delta$ of the convolutional code generated by $G(z)$, then $G(z)$ is called a \textbf{minimal} generator matrix for $\mathcal{C}$ and
$\mu=\max_{i=1,\hdots,k}\nu_i$ is called the $\textbf{memory}$ of $\mathcal{C}$.
A minimal encoder $G(z)$ is said to have \textbf{generic row degrees} if  
$\mu=\lceil\frac{\delta}{k}\rceil$ and 
the last $k\lceil\frac{\delta}{k}\rceil-\delta$ rows of $G_{\mu}$ are zero.
  Denote by $\Tilde{G}_{\mu}\in\mathbb F_q^{(\delta+k-k\lceil\frac{\delta}{k}\rceil)\times n}$ the matrix consisting of the first $\delta+k-k\lceil\frac{\delta}{k}\rceil$, i.e. nonzero, rows of  $G_{\mu}$. Also denote by $\tilde{u}_i\in\mathbb F_q^{\delta+k-k\lceil\frac{\delta}{k}\rceil}$ the vector consisting of the first $\delta+k-k\lceil\frac{\delta}{k}\rceil$ components of $u_i\in\mathbb F_q^k$.
\end{definition}

In the following we introduce two main distance notions to measure the error-correcting capability of convolutional codes.

\begin{definition}
The weight 
of $v(z)=\sum_{t=0}^{\deg(v(z))}v_tz^t \in \mathbb{F}_q[z]^n$ is defined as $wt(v(z))=\sum_{t=0}^{\deg(v(z))}wt(v_t)$, where $wt(v_t)$ is the (Hamming) weight of $v_t\in\mathbb F_q^n$. Moreover we denote the (Hamming) distance by $d(v(z),\hat{v}(z))=wt(v(z)-\hat{v}(z))$ for $v(z),\hat{v}(z)\in\mathbb F_q[z]^n$.
The \textbf{free distance} of a convolutional code
  $\mathcal{C}$ is given by
  $$d_{free}(\mathcal{C}):=\min_{v(z)\in\mathcal{C}}\left\{wt(v(z))\
    |\ v(z) \neq 0\right\}.$$
\end{definition}

\begin{definition}
For $j\in\mathbb N_0$ and $v_{[0,j]}:=(v_0,\hdots,v_j)$, the \textbf{j-th column distance}\index{column distance} of a convolutional code $\mathcal{C}$ is defined as
\[
d_j^c(\mathcal{C}):=\min\left\{wt(v_{[0,j]})\ |\ {v}(z)\in\mathcal{C} \text{ and }{v}_0 \neq 0\right\}.
\]
\end{definition}

\begin{definition}\label{defopt}
   We say that a binary $(n,k,\delta)$ convolutional code $\mathcal{C}$ has \textbf{optimal column distances} if there exists no binary $(n,k,\delta)$ convolutional code $\hat{\mathcal{C}}$ such that $d^c_j(\hat{\mathcal{C}})>d^c_j(\mathcal{C})$ for some $j\in\mathbb N_0$ and $d^c_i(\hat{\mathcal{C}})=d^c_i(\mathcal{C})$ for all $0\leq i<j$.
\end{definition}

In the following we present a class of binary convolutional codes which are optimal in sense of the previous definition.

\begin{definition}[\hspace{-0.10mm}\cite{isit23}]
Let $S(\delta+k)\in\mathbb F_2^{(\delta+k)\times 2^{\delta+k}}$ 
be the generator matrix of a binary simplex code, i.e. the columns of $S(\delta+k)$ are exactly all elements from $\mathbb F_2^{\delta+k}\setminus\{0\}$,
and remove the columns whose first $k$ entries are equal to zero and define the resulting matrix as $S(k+\delta)_k\in\mathbb F_2^{(\delta+k)\times (2^{\delta+k}-2^{\delta})}$. We call the (block) code with generator matrix $S(\delta+k)_k$ a \textbf{$k$-partial simplex code} $\mathcal{S}(\delta+k)_k$ of dimension $\delta+k$. 
\end{definition}

Note that a $1$-partial simplex code of dimension $\delta+1$ is a Reed-Muller code $RM(1, \delta)$ of dimension $\delta+1$.

\begin{theorem}[\hspace{-0.10mm}\cite{isit23}]
Let $\mathcal{C}$ be a $(2^{\delta}(2^k-1),k,\delta)$ convolutional code with generator matrix $G(z)=\sum_{i=0}^{\lceil\frac{\delta}{k}\rceil}G_iz^i\in\mathbb F_2[z]^{k\times 2^{\delta}(2^k-1)}$ where $(G_0^\top\ \cdots\ \ G_{\mu-1}^\top \ \tilde{G}_{\mu}^\top)^\top=S(\delta+k)_k$. $\mathcal{C}$ is called \textbf{$k$-partial simplex convolutional code}. 
Then, 
$$d_j^c(\mathcal{C})=\begin{cases}
n\cdot\frac{2^{k-1}}{2^k-1}+j\frac{n}{2} & \text{for}\quad j\leq\lfloor\frac{\delta}{k}\rfloor\\
n\cdot\frac{2^{k-1}}{2^k-1}+\lfloor\frac{\delta}{k}\rfloor\cdot \frac{n}{2}  & \text{for}\quad j\geq\lfloor\frac{\delta}{k}\rfloor
\end{cases}$$
and these column distances are optimal in the sense of Definition \ref{defopt}.
Moreover, $$d_{free}(\mathcal{C})=\lim_{j\rightarrow\infty}d_j^c(\mathcal{C})=n\cdot\frac{2^{k-1}}{2^k-1}+\lfloor\frac{\delta}{k}\rfloor\cdot \frac{n}{2}.$$
\end{theorem}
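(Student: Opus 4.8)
The plan is to compute everything through the truncated sliding generator matrix. Writing $L:=\lfloor\frac{\delta}{k}\rfloor$, $m:=\delta+k$, and
$$G_j^c=\begin{pmatrix}G_0 & G_1 & \cdots & G_j\\ & G_0 & \cdots & G_{j-1}\\ & & \ddots & \vdots\\ & & & G_0\end{pmatrix},$$
one has $v_{[0,j]}=u_{[0,j]}G_j^c$ and, since $G_0$ consists of the first $k$ (independent) rows of $S(m)_k$ and hence has full row rank, $v_0\neq 0\iff u_0\neq 0$; thus $d_j^c(\mathcal C)=\min\{wt(v_{[0,j]}):u_0\neq 0\}$. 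The structural fact I would exploit is that the columns of $S(m)_k$ are precisely the vectors $c\in\mathbb F_2^{m}$ whose first length-$k$ block $c^{[0]}$ is nonzero, each appearing once. The entry of $v_t$ in the column indexed by $c$ is $\sum_{i=0}^{t}\langle u_i,c^{[t-i]}\rangle$, where $c^{[i]}$ denotes the $i$-th length-$k$ block of $c$.

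First I would establish, for any input with $u_0\neq 0$ and any $u_1,\dots,u_t$, the two identities $wt(v_0)=2^{\delta+k-1}$ and $wt(v_t)=\frac n2$ for $1\le t\le L$. For $t=0$ this is the classical simplex/Plotkin count: exactly $2^{k-1}$ of the $2^k-1$ admissible values of $c^{[0]}$ satisfy $\langle u_0,c^{[0]}\rangle=1$, each extended in $2^{\delta}$ ways, giving $2^{k-1}2^{\delta}=n\cdot\frac{2^{k-1}}{2^k-1}$. For $1\le t\le L$ the block $c^{[t]}$ is a full, unconstrained length-$k$ block that is paired with $u_0\neq 0$; hence for every fixing of the remaining blocks exactly half of the values of $c^{[t]}$ make the entry a $1$, and since $c^{[t]}$ does not interfere with the constraint $c^{[0]}\neq 0$, exactly $n/2$ of the $n$ columns carry a $1$. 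Summing shows that every codeword with $u_0\neq 0$ has $wt(v_{[0,j]})=2^{\delta+k-1}+j\frac n2$ whenever $j\le L$; being constant over all such codewords, this value is also the minimum, which is the first branch of the formula.

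For $j\ge L$ I would use monotonicity of $j\mapsto d_j^c$ together with one explicit terminating codeword. The degree-$0$ input $u(z)=u_0$ yields $v(z)=\sum_{i=0}^{\mu}(u_0G_i)z^i$; when $k\mid\delta$ one has $\mu=L$, and when $k\nmid\delta$ the matrix $\tilde G_\mu$ has fewer than $k$ rows, so a nonzero $u_0$ with $\tilde u_0=0$ gives $u_0G_\mu=0$. In both cases the codeword has total weight $2^{\delta+k-1}+L\frac n2$, so $d_j^c(\mathcal C)\le 2^{\delta+k-1}+L\frac n2$ for all $j$; combined with $d_L^c(\mathcal C)=2^{\delta+k-1}+L\frac n2$ and monotonicity this gives the second branch. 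Since a minimal encoder is non-catastrophic, $d_{free}(\mathcal C)=\lim_{j\to\infty}d_j^c(\mathcal C)$, which is the claimed value, witnessed from above by the terminating codeword.

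The substantive part is optimality, where I would prove the stronger claim that any binary $(n,k,\delta)$ code $\hat{\mathcal C}$ with $d_0^c(\hat{\mathcal C})=2^{\delta+k-1}$ satisfies $d_j^c(\hat{\mathcal C})\le d_j^c(\mathcal C)$ for every $j$; Definition \ref{defopt} then follows, because a strict improver must in particular match $\mathcal C$ at $i=0$, while the Plotkin bound gives $d_0^c(\hat{\mathcal C})\le 2^{\delta+k-1}$ unconditionally. Equality $d_0^c(\hat{\mathcal C})=n\cdot\frac{2^{k-1}}{2^k-1}$ forces equality in Plotkin, hence $\hat G_0$ has full row rank and no zero column. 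Averaging $wt(v_{[0,j]})$ over $u_0\neq 0$ and $u_1,\dots,u_j$ then yields $\mathbb E[wt(v_t)]=\frac n2$ for every $t\ge 1$ — each coordinate is already ``activated'' by $\hat G_0$, so its value is an unbiased bit — whence $d_j^c(\hat{\mathcal C})\le 2^{\delta+k-1}+j\frac n2=d_j^c(\mathcal C)$ for $j\le L$. For $j>L$ the matching hypothesis gives $d_L^c(\hat{\mathcal C})=2^{\delta+k-1}+L\frac n2$, which equals the average just computed; since the minimum meets the mean, every codeword with $u_0\neq 0$ has $wt(v_{[0,L]})=2^{\delta+k-1}+L\frac n2$ exactly. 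Finally, in any minimal encoder fewer than $k$ rows can have degree exceeding $L=\lfloor\frac\delta k\rfloor$ (otherwise the row degrees would sum to more than $\delta$), so the coefficient matrices $\hat G_{L+1},\hat G_{L+2},\dots$ stacked have rank $<k$; choosing $u_0\neq 0$ in their common left kernel makes $u_0\hat G(z)$ a codeword supported on $t\le L$ of total weight $wt(v_{[0,L]})=2^{\delta+k-1}+L\frac n2$, so $d_{free}(\hat{\mathcal C})\le 2^{\delta+k-1}+L\frac n2$ and hence $d_j^c(\hat{\mathcal C})\le d_j^c(\mathcal C)$. I expect this last step — reconciling the coordinatewise averaging bound with the global free-distance constraint through the rank deficiency of the high-degree coefficients — to be the crux.
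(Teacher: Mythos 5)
This theorem is imported from \cite{isit23} and the paper you were given contains no proof of it, so there is nothing in-source to compare your argument against; I can only assess it on its own terms. The first two thirds are correct and clean: the identification of the columns of $S(\delta+k)_k$ with the vectors whose leading $k$-block is nonzero, the observation that $wt(v_0)=2^{\delta+k-1}$ and $wt(v_t)=n/2$ for $1\le t\le L$ for \emph{every} input with $u_0\neq 0$ (because $c^{[t]}$ is a free full-length block paired with $u_0$), the terminating codeword obtained from $u(z)=u_0$ with $\tilde u_0=0$ when $k\nmid\delta$, and the resulting values of $d_j^c$ and $d_{free}$ all check out.

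The gap is in the optimality part, and it is concentrated in one place: you treat $\hat G_0$ as if it necessarily had rank $k$. Your ``unconditional'' Plotkin bound $d_0^c(\hat{\mathcal C})\le n\cdot\frac{2^{k-1}}{2^k-1}$ is the averaging bound for a $k$-dimensional code; if $\mathrm{rank}(\hat G_0)=r<k$ the set of possible $v_0$ is only an $r$-dimensional space and the same averaging only gives $d_0^c(\hat{\mathcal C})\le n\cdot\frac{2^{r-1}}{2^r-1}$, which is \emph{larger} (e.g.\ a competitor with $\hat G_0$ of rank $1$ and a full-weight row has $d_0^c=n$). Likewise, ``equality in Plotkin forces $\hat G_0$ to have full row rank'' does not follow: a rank-deficient $\hat G_0$ all of whose nonzero row-space elements happen to have weight exactly $n\cdot\frac{2^{k-1}}{2^k-1}$ also meets your equality condition, and then the no-zero-column conclusion (which your $\mathbb E[wt(v_t)]=n/2$ computation genuinely needs, since a coordinate of $v_t$ depending only on $u_0\neq 0$ is biased towards $1$ with probability $\frac{2^{k-1}}{2^k-1}>\frac12$) breaks down. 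This is really a statement about the comparison class in Definition \ref{defopt}: the result is only true under the standard convention that one compares against codes admitting a delay-free encoder ($\mathrm{rank}(\hat G_0)=k$), an assumption the cited source presumably makes and that you must state and use explicitly. Once that hypothesis is in place, your chain (Plotkin at $j=0$, forced constant weight and absence of zero columns, the $n/2$ averaging for $1\le t\le L$, min-meets-mean at $j=L$, and the rank-deficiency of the stacked high-degree coefficients to produce a terminating codeword) does go through.
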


\section{Viterbi decoding for Convolutional Codes}\label{sec:DecodingofConvolutional Codes}

A convolutional encoder can be represented by a 
trellis diagram
with the states $S_t=(u_{t-1},\hdots, u_{t-\mu})$ at each time unit $t$ representing the previous $\mu$ inputs.
Transitions occur exclusively between states $S_t$ at time unit $t$ and states $S_{t+1}$ at time unit $t + 1$;
see Figure \ref{trellis}.

The algorithm that follows is the Viterbi algorithm, widely known as ``dynamic programming'' applied to a trellis diagram \cite{forney2005}. This algorithm will be adapted for $k$-partial simplex convolutional codes
in Section \ref{sec:NewDecodingAlgorithm} to obtain a reduced complexity.

\begin{algorithm}[H]
\caption{Viterbi Decoding}\label{Viterbi}
Let $\mathcal{C}$ be a convolutional code with minimal generator matrix $G(z)$ 
and memory $\mu$. Let $r(z) = \sum_{i=0}^N r_i z^i$ be a received message
to be decoded, $N$ the input length.
Set up the trellis.\\
\textbf{Step 1:} Start with time $t = \mu$ and determine the partial path metric
$d(r_{[0,t-1]},c_{[0,t-1]})= \sum_{i=0}^{t-1}d(r_{i}, c_{i})$
for the single path entering each state of the trellis.
Save the path (called the survivor) and its metric for each
state.\\
\textbf{Step 2:}
\begin{algorithmic}
\For{$t$ from $\mu + 1$ to $N+1$}
\State \multiline{
For each state and the corresponding $2^k$ incoming paths do the following:}
\State \multiline{
1) Add the branch metric $d(r_{t-1},c_{t-1})$ entering the state to the partial path metric of the corresponding survivor at time $t-1$.\\
2) Compare the partial path metrics of all $2^{k}$ paths entering each state.\\
3) For each state, save the path with the smallest partial path metric (the survivor) and its metric $d(r_{[0,t-1]},c_{[0,t-1]})$.
If there are multiple such paths take one randomly.
Eliminate all other paths.
}
\EndFor
\end{algorithmic}
\textbf{Step 3:} 
Return the single survivor path and its metric.
\end{algorithm}

\begin{figure}
    \centering
    \tikzstyle{state}=[shape=circle,draw=black,scale=0.7]
\tikzstyle{lightedge}=[<-, dashed,>=stealth, scale=0.7]
\tikzstyle{mainstate}=[state,very thick]
\tikzstyle{mainedge}=[<-,very thick,>=stealth, scale=0.7]
\begin{tikzpicture}[]
\node               at (0,4.3) {$\scriptstyle{t=0}$};
\node[mainstate] (s1_1) at (0,3.8) {$00$};
\node               at (2,4.3) {$\scriptstyle{ t=1}$};
\node[state] (s1_2) at (2,3.8) {$00$}
    edge[lightedge] (s1_1);
\node[mainstate] (s3_2) at (2,2.4) {$10$}
     edge[mainedge] node[midway, above, yshift = -1mm, sloped] {\bf{\tiny{1/1111}}}(s1_1);
\node               at (4,4.3) {$\scriptstyle{ t=2}$};
\node[state] (s1_3) at (4,3.8) {$00$}
    edge[lightedge]  (s1_2);
\node[mainstate] (s2_3) at (4,3.1) {$01$}
    edge[mainedge] node[midway, above, xshift=-1mm, yshift = -1mm, sloped] {\bf{\tiny{0/0101}}}(s3_2);
\node[state] (s3_3) at (4,2.4) {$10$}
    edge[lightedge] (s1_2);   
\node[state] (s4_3) at (4,1.7) {$11$}
    edge[lightedge] (s3_2);
\node               at (6,4.3) {$\scriptstyle{ t=3}$};
\node[state] (s1_4) at (6,3.8) {$00$}
    edge[lightedge]  (s1_3)
    edge[lightedge]  (s2_3);
\node[state] (s2_4) at (6,3.1) {$01$}
    edge[lightedge] (s3_3)
    edge[lightedge] (s4_3);
\node[mainstate] (s3_4) at (6,2.4) {$10$}
    edge[lightedge] (s1_3)
    edge[mainedge] node[midway, above, yshift = -1mm, xshift=-1mm, sloped] {\bf{\tiny{1/1100}}}(s2_3);
\node[state] (s4_4) at (6,1.7) {$11$}
    edge[lightedge] (s3_3)
    edge[lightedge] (s4_3);
\node               at (8,4.3) {$\scriptstyle{ t=4}$};
\node[state] (s1_5) at (8,3.8) {$00$}
    edge[lightedge]  (s1_4)
    edge[lightedge]  (s2_4);
\node[state] (s2_5) at (8,3.1) {$01$}
    edge[lightedge] (s3_4)
    edge[lightedge] (s4_4);
\node[state] (s3_5) at (8,2.4) {$10$}
    edge[lightedge] (s1_4)
    edge[lightedge] (s2_4);
\node[mainstate] (s4_5) at (8,1.7) {$11$}
    edge[mainedge] node[midway, above, yshift = -1mm, sloped] {\bf{\tiny{1/1010}}} (s3_4)
    edge[lightedge] (s4_4);
\node               at (10,4.3) {$\scriptstyle{ t=5}$};
\node[state] (s1_6) at (10,3.8) {$00$}
    edge[lightedge]  (s1_5)
    edge[lightedge]  (s2_5);
\node[mainstate] (s2_6) at (10,3.1) {$01$}
    edge[lightedge] (s3_5)
    edge[mainedge] node[midway, above, yshift = -1mm, xshift=-1mm, sloped] {\bf{\tiny{0/0110}}} (s4_5);
\node               at (12,4.3) {$\scriptstyle{ t=6}$};
\node[mainstate] (s1_7) at (12,3.8) {$00$}
    edge[lightedge]  (s1_6)
    edge[mainedge] node[midway, above, yshift = -1mm, xshift=-1mm, sloped] {\bf{\tiny{0/0011}}} (s2_6);
\end{tikzpicture}
    \caption{Trellis for Algorithm \ref{Viterbi} applied to Example \ref{exviterbi}.} 
    \label{trellis}
\end{figure}
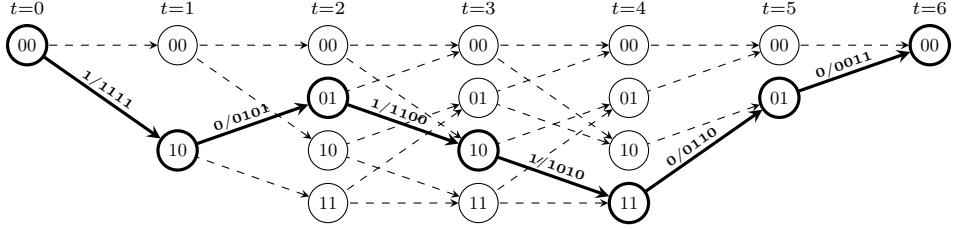

\begin{example}\label{exviterbi}
Take the $(4,1,2)$ convolutional code $\mathcal{C}\in\mathbb F_2[z]^4$ with generator matrix $G(z)=( 1\ \ 1+z\ \ 1+z^2\ \ 1+z+z^2)$.
As $d_{free}(\mathcal{C})=8$, 
$\mathcal{C}$ can correct up to $3$ errors.
Assume we want to send the message $u(z)=1+z^2+z^3$, which corresponds to the codeword $c(z)=(1111)+(0101)z+(1100)z^2+(1010)z^3+(0110)z^4+(0011)z^5$. The corresponding trellis is shown in Figure \ref{trellis}.
Assume we receive the word 
$r(z)=(1111)+(0101)z+(0100)z^2+(1010)z^3+(1111)z^4+(0011)z^5$ and want to decode it with the Viterbi algorithm.

According to Step 1 of the Viterbi Algorithm, we have to calculate the path metric $d=d(c_{[0,1]} , r_{[0,1]})$
for the $4$ paths ending in the $4$ states at time $t=2$.

State 00: 
$d=d(0000,1111)+d(0000,0101)=6$

State 01: 
$d=d(1111,1111)+d(0101,0101)=0$

State 10:
$d=d(0000,1111)+d(1111,0101)=6$

State 11:
$d=d(1111,1111)+d(1010,0101)=4$ \\
For $t=3$ two paths are ending in each state. For each input $u=(u_2,u_1,u_0)$ (i.e. $(u_2,u_1)$ is the current and $(u_1,u_0)$ the previous state)
we calculate the corresponding path metrics $d=d(c_{[0,2]} , r_{[0,2]})=d(c_{[0,1]} , r_{[0,1]})+d(c_2,r_2)$:

State 00: $u=(000), d=6+d(0000,0100)=6+\mathbf{1}=7$\\
         $  \indent\qquad \qquad\  u=(001), d=0+d(0011,0100)=0+\mathbf{3}=3$

State 01: 
$u=(010), d=6+d(0101,0100)=6+\mathbf{1}=7$\\
$ \indent\qquad \qquad\ u=(011), d=4+d(0110,0100)=4+\mathbf{1}=5$

State 10:
$u=(100), d=6+d(1111,0100)=6+\mathbf{3}=9$\\
$ \indent\qquad \qquad \ u=(101), d=0+d(1100,0100)=0+\mathbf{1}=1$

State 11:
$u=(110), d=6+d(1010,0100)=6+\mathbf{3}=9$\\
$ \indent\qquad \qquad \ u=(111), d=4+d(1001,0100)=4+\mathbf{3}=7$

i.e. here for all states the second path is the survivor. 

We continue in the same way for $t\geq 4$.
\end{example}

Note that in each time-step $t$ in the Viterbi algorithm one has to compute the distances between $r_{t-1}$ and all the codewords of the block code generated by  $(G_0^\top\ \cdots\ \ G_{\mu-1}^\top \ \tilde{G}_{\mu}^\top)^\top$.

\section{Decoding of Reed-Muller Codes}\label{sec:DecodingofHadamard}
In this section we describe the decoding of binary Reed-Muller codes $RM(1, m)$ following \cite{rushforth1969fast}.
These are $1$-partial simplex codes $\mathcal{S}(m+1)_1$.
Alternatively, we can describe them recursively.
A generator matrix for $RM(1, 1)$ is given by
\vspace{-1mm}
$$
    R(1) = 
    \begin{pmatrix}
        1 & 1\\
        0 & 1
    \end{pmatrix}.
$$
\vspace{-3mm}\\
Recursively we get a generator matrix for $RM(1, m+1)$ as
\begin{align}\label{Grec}
    R(m+1) = 
    \left[
		\begin{array}{c|c}
			R(m) & R(m)\\
			\hline
			0, \ldots, 0 & 1, \ldots, 1
		\end{array}
	\right]\in\mathbb F_2^{(m+2)\times 2^{m+1}}.
\end{align}
Next we explain how these codes are related to Hadamard matrices.
We define the Hadamard matrix $H_2 = (\begin{smallmatrix}
    1 & 1\\
    1 & -1
\end{smallmatrix})$
and 
recursively
$$
    H_{2^{m+1}} =
    \begin{pmatrix}
        H_{2^m} & H_{2^m} \\
        H_{2^m} & - H_{2^m}
    \end{pmatrix}.
$$

\begin{theorem}\label{order}
After replacing all entries in $H_{2^m}$ equal to $-1$ by $0$ , one obtains all rows of $H_{2^m}$  (in the correct order) if one forms linear combinations of the rows of $R(m)$ in the following way: Start with the first row of $R(m)$. Then, for each $i\in\{2,\hdots, 2^{m+1}\}$ add row $i$ to the already enumerated $2^{i-2}$ codewords in the same order.\\
Moreover if we add the first row of $R(m)$ to all these already obtained $2^{m}$ codewords in the same order, we obtain all rows of $-H_{2^m}$ replacing $-1$ by $0$ (in the correct order).\\
This implies that all codewords of $RM(1,m)$ can be represented by the rows of $\begin{pmatrix}
        H_{2^{m}} \\ - H_{2^{m}}
    \end{pmatrix}$.
\end{theorem}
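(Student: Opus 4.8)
The plan is to prove the two ordering claims and the concluding cardinality statement simultaneously by induction on $m$, because the block recursion couples the $H_{2^m}$-statement to the $-H_{2^m}$-statement and neither can be carried on its own. Write $\bar A$ for the $0/1$ matrix obtained from a $\pm1$ matrix $A$ by replacing each $-1$ with $0$, and let $g_1,\dots,g_{m+1}$ be the rows of $R(m)$ (I read the index $i$ in the statement as ranging over these $m+1$ rows, i.e. $i\in\{2,\dots,m+1\}$, which is what makes the count $2^{i-2}$ consistent). The procedure maintains a list that doubles at every step: from a current list $[w_1,\dots,w_{2^{i-2}}]$ it appends $[w_1+g_i,\dots,w_{2^{i-2}}+g_i]$, so after step $m+1$ it outputs an ordered list $L_m$ of $2^m$ codewords. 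For the base case $m=1$ I would just compute $L_1=[(1\,1),(1\,0)]=\bar H_2$ and check that adding $g_1=(1\,1)$ entrywise gives $[(0\,0),(0\,1)]=\overline{-H_2}$.

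For the inductive step I would exploit $R(m+1)=\bigl[\begin{smallmatrix}R(m)&R(m)\\0&1\end{smallmatrix}\bigr]$: its first $m+1$ rows are the duplicated rows $(g_i\,|\,g_i)$ and its last row is $(0\cdots0\,|\,1\cdots1)$. Hence the first $m+1$ steps of the procedure use only duplicated rows and, by the induction hypothesis, produce $[(w_j\,|\,w_j)]_j$ with $[w_j]_j=L_m=\bar H_{2^m}$; this is exactly the top half $(\bar H_{2^m}\,|\,\bar H_{2^m})$ of $\bar H_{2^{m+1}}$. The last step appends the same list with the all-ones vector added on the right half, giving $[(w_j\,|\,\overline{w_j})]_j$, and here the \emph{moreover} part of the hypothesis—that complementing $\bar H_{2^m}$ yields $\overline{-H_{2^m}}$ in the same order—identifies this with the bottom half $(\bar H_{2^m}\,|\,\overline{-H_{2^m}})$ of $\bar H_{2^{m+1}}$ coming from $H_{2^{m+1}}=\bigl[\begin{smallmatrix}H_{2^m}&H_{2^m}\\H_{2^m}&-H_{2^m}\end{smallmatrix}\bigr]$. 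The \emph{moreover} claim for $m+1$ then follows by the same bookkeeping, since adding $g_1^{(m+1)}=(1\cdots1\,|\,1\cdots1)$ complements every codeword and sends the four blocks to those of $\overline{-H_{2^{m+1}}}$.

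For the final implication I would show the $2^{m+1}$ rows of $\bigl[\begin{smallmatrix}H_{2^m}\\-H_{2^m}\end{smallmatrix}\bigr]$ (after $-1\mapsto0$) are pairwise distinct and then count. Distinctness inside each block is inherited from the orthogonality, hence distinctness, of the rows of $H_{2^m}$, which survives the injective entrywise map $-1\mapsto0$; the two blocks are disjoint because the first column of $H_{2^m}$ is all $+1$ while that of $-H_{2^m}$ is all $-1$, so the two families of rows differ already in their first entry. This yields $2^{m+1}$ distinct codewords, which equals $|RM(1,m)|$, so they exhaust the code.

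The step I expect to be the main obstacle is precisely the coupling in the induction: the bottom-right block of $\bar H_{2^{m+1}}$ is $\overline{-H_{2^m}}$ rather than $\bar H_{2^m}$, so the ordering claim for $m+1$ cannot be closed without already having the complementation (\emph{moreover}) claim for $m$. I would therefore bundle both claims into a single induction hypothesis and fix once and for all the column-indexing convention used to list the rows of $H_{2^m}$, so that the phrase ``in the same order'' has an unambiguous meaning at each doubling step.
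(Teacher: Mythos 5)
Your proof is correct and follows essentially the same route as the paper: induction on $m$ using the block recursions for $R(m+1)$ and $H_{2^{m+1}}$, with the top half coming from the duplicated rows $(g_i\,|\,g_i)$ and the bottom half from adding $(0\cdots0\,|\,1\cdots1)$. You additionally make explicit two points the paper leaves implicit --- the coupling of the $H_{2^m}$ and $-H_{2^m}$ claims in the induction hypothesis, and the distinctness-plus-counting argument for the final ``all codewords'' assertion --- both of which are welcome refinements rather than a different approach.
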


\begin{proof}
We do induction with respect to $m$.
For $m=1$, we start with the codeword $(1, 1)$ and get the first row of $H_2$.
    Then we add the second row of $R(1)$ and get $(1, 0)$ which corresponds to the second row of $H_2$.
    Adding $(1, 1)$ to both codewords we get $- H_2$.\\
Assume we enumerated all linear combinations of the rows of $R(m)$ that include the first row as described in the theorem to obtain all rows of $H_{2^m}$. We will use \eqref{Grec} to prove that the statement of the theorem is true for $m+1$.
Obviously, taking linear combinations of $[R(m)\ R(m)]$ in the same way, we obtain all rows of $[H_{2^m}\ H_{2^m}]$ in the correct order. Now, we add the last row of $R(m+1)$, i.e.  $(0, \ldots, 0, 1, \ldots, 1)$, to all previously enumerated $2^{m-1}$ codewords of $RM(1,m+1)$ and obtain (after replacing $0$ by $-1$) all rows of $[H_{2^{m}}\ -H_{2^{m}}]$ in the correct order. Hence, in total, we obtained all rows of $H_{2^{m+1}} =     \begin{pmatrix}
        H_{2^m} & H_{2^m} \\
        H_{2^m} & - H_{2^m}
    \end{pmatrix}$ (in the correct order).\\
Since the first row of $R(m+1)$ is equal to $(1, \ldots, 1)$, adding this row to all the $2^{m+1}$ previously enumerated codewords of $RM(1,m+1)$, we obtain all rows of the matrix \\
$\begin{pmatrix}
        H_{2^{m+1}} \\ - H_{2^{m+1}}
    \end{pmatrix}=\begin{pmatrix}
        H_{2^m} & H_{2^m} \\
        H_{2^m} & - H_{2^m}\\
        -H_{2^m} & -H_{2^m} \\
        -H_{2^m} &  H_{2^m}
    \end{pmatrix}$.
\end{proof}

From now on we assume that the codewords $c$ of $RM(1,m)$ and received words $r$ have entries in $\lbrace -1, 1 \rbrace$.
Then $c_i = r_i$ if and only if $c_i \cdot r_i = 1$.
Otherwise this product is $-1$.
Thus, the closest codeword to a received word $r$ is the $c$ maximizing the inner product  $ r \cdot c = \sum_{i=1}^n r_i \cdot c_i.$
Note that in general
\begin{equation}\label{eq_distance_inner_product}
    d(r, c) = \frac{n - (r\cdot c)}{2},
\end{equation}
where $d(r, c)$ is the Hamming distance between $r$ and $c$ (which is the same also if we use $0$'s instead of $-1$'s).
Hence, we can go back and forth between Hamming distance and the value of the inner product.
In $RM(1,m)$, where all the codewords can be enumerated as the rows of $(\begin{smallmatrix}
    H_{2^m} \\
    - H_{2^m} 
\end{smallmatrix})$,
the codeword which is closest to the received word $r$ is the row of $(\begin{smallmatrix}
    H_{2^m} \\
    - H_{2^m}
\end{smallmatrix})$
that corresponds to the entry with maximal value of the vector resulting from the matrix vector product
$
    \begin{pmatrix}
        H_{2^m} \\
    - H_{2^m}
    \end{pmatrix}\cdot
    r^T.
$
We will use fast Hadamard multiplication, which essentially boils down to a decomposition of $H_{2^m}$ and allows to do the multiplication $H_{2^m}\cdot r^\top$ with $n \log(n)$ operations instead of $n^2$, which is the cost of a general matrix vector multiplication of this size.
One can show as in \cite{rushforth1969fast} that
$H_{2^m} = (CP)^m$,
where the permutation $P$ is the Faro-out-shuffle (see \cite{diaconis1983mathematics}) and
$$
    C = 
    \left(\begin{smallmatrix}
        H_2 & & 0\\
        & \ddots & \\
       0 & & H_2
    \end{smallmatrix}\right).
$$
Thus, a multiplication $C \cdot r^T$ needs $n = 2^m$ additions. Since we need $m$ such multiplications we get a total complexity of $2^m \cdot m = n \log(n)$.
We assume that permutations are constant time.
In sum we obtain
\begin{theorem}
    For a binary Reed-Muller code $RM(1, m)$ of length $n = 2^m$ the distance between all codewords and a received word can be computed with complexity $n \log(n)$.
    In particular decoding can be done with complexity $n \log(n)$.
\end{theorem}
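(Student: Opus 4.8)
The plan is to reduce the computation of all $2^{m+1}=2n$ codeword distances to a single fast Hadamard transform, and then to account for the cost of that transform. First I would invoke Theorem \ref{order}, which identifies the codewords of $RM(1,m)$, in the $\{-1,1\}$ representation, with the rows of $(\begin{smallmatrix} H_{2^m} \\ -H_{2^m}\end{smallmatrix})$. By the identity \eqref{eq_distance_inner_product} the Hamming distance $d(r,c)$ is an affine function of the inner product $r\cdot c$, so obtaining every distance is equivalent, up to one fixed affine map applied entrywise, to obtaining every inner product, i.e. to forming the matrix-vector product $(\begin{smallmatrix} H_{2^m} \\ -H_{2^m}\end{smallmatrix})\, r^\top$. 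Since its lower block is the negation of the upper one, it suffices to compute $H_{2^m} r^\top$ and to read off the remaining $n$ entries by a sign change, which costs only $O(n)$ additional operations.

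Next I would exploit the factorization $H_{2^m}=(CP)^m$ recalled above, with $C$ block-diagonal consisting of $n/2$ copies of $H_2$ and $P$ the Faro-out-shuffle. The product $H_{2^m} r^\top$ is then evaluated by applying the pair $CP$ exactly $m$ times to $r^\top$. Each application of $P$ is a permutation and hence free under the constant-time assumption, while each application of $C$ decomposes into $n/2$ independent multiplications by $H_2=(\begin{smallmatrix}1&1\\1&-1\end{smallmatrix})$, each being one addition and one subtraction, for a total of $n$ arithmetic operations per layer. Over the $m$ layers this yields $m\cdot n = 2^m\cdot m = n\log(n)$ operations for the whole transform.

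Finally I would assemble the bounds: the entrywise affine conversion from inner products to distances and the sign flip producing the lower block each contribute $O(n)$, while the transform contributes $n\log(n)$, so all distances are computed in $n\log(n)$ operations. For decoding, the closest codeword is indexed by the largest entry of $(\begin{smallmatrix} H_{2^m} \\ -H_{2^m}\end{smallmatrix})\, r^\top$, and scanning the $2n$ entries for this maximum is a further $O(n)$ step that does not affect the asymptotics, giving decoding complexity $n\log(n)$ as claimed. I expect the only delicate point to be the per-layer operation count for $C$ together with keeping the base-$2$ logarithm consistent with $n=2^m$; the factorization $H_{2^m}=(CP)^m$ is the substantive ingredient, but since it is supplied by \cite{rushforth1969fast} it can be used here as a black box, leaving the argument a routine layering of linear-time reductions on top of the transform.
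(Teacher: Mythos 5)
Your proposal is correct and follows essentially the same route as the paper: identify the codewords with the rows of $\bigl(\begin{smallmatrix} H_{2^m} \\ -H_{2^m}\end{smallmatrix}\bigr)$ via Theorem \ref{order}, convert distances to inner products via \eqref{eq_distance_inner_product}, and evaluate $H_{2^m}r^\top$ through the factorization $H_{2^m}=(CP)^m$ at a cost of $n$ operations per layer over $m$ layers. Your additional bookkeeping of the $O(n)$ sign-flip, affine-conversion, and maximum-search steps is a slightly more explicit version of what the paper leaves implicit, but it is the same argument.
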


Finally, note that comparing all codewords to the received word also has complexity $n^2$, which is equal to the complexity for the naive matrix vector multiplication without the fast Hadamard transform.

\section{Decoding of Partial simplex Codes}\label{sec:DecodingofPartialSimplexCodes}

In this section we generalize the results of the previous section 
to $k$-partial simplex codes with $k>1$.
Note that the generator matrix $S(k+\delta)_k$ can be written as

\vspace{-3mm}
\small\begin{equation*}
    S(k+\delta)_k = \left(
    \begin{array}{c|c|c|c|c}
        R(m)   & 0 \ldots 0            & 0 \ldots 0 & \ldots & 0 \ldots 0\\
                                & R(m-1)  & 0 \ldots 0 & \ldots & 0 \ldots 0\\
                                &                       & R(m-2) &\ldots & 0                                                  \ldots 0 \\
        \vdots                  & \vdots                & \vdots    &\vdots     & \vdots \\
                                &                       &           &        & R(\delta)
    \end{array}
    \right).
\end{equation*}
\normalsize
for $m = \delta + k - 1$.
The following example illustrates how to generalize the ideas from the previous section.
\begin{example}
    Take $k=3$ and $\delta=1$. Then, one obtains
    
    \vspace{-3mm}
    \small\begin{align*}
    &S(4)_3=\begin{pmatrix}
        R(3) & 0_{1\times 4} & 0_{1\times 2}\\
        & R(2) & 0_{1\times 2}\\
        & & R(1)\end{pmatrix}=\\
        &\left(\begin{array}{ccccccccccccccccccccccc}
       1 & 1 & 1 & 1 & 1 & 1 & 1 & 1& \vline & 0 & 0 & 0 & 0 & \vline &0 & 0\\ 0 & 1 & 0 & 1 & 0 & 1 & 0 & 1 & \vline & 1 & 1 & 1 & 1 & \vline &0 & 0\\
       0 & 0 & 1 & 1 & 0 & 0 & 1 & 1 & \vline &  0 & 1 & 0 & 1 & \vline &1 & 1\\
       0 & 0 & 0 & 0 & 1 & 1 & 1 & 1 & \vline & 0 & 0 & 1 & 1 & \vline &0 &1           \end{array}\right).
    \end{align*}
    \normalsize
We enumerate the codewords in the same way as explained in the previous section and replace $0$'s by $-1$'s.
We only have to check what happens to the two rightmost blocks.
Consider 
\vspace{-1mm}
\[
    \begin{pmatrix}
        0 & 0 & 0 & 0\\
        1 & 1 & 1 & 1\\
        0 & 1 & 0 & 1\\
        0 & 0 & 1 & 1
    \end{pmatrix}.
\]
\vspace{-2mm}\\
We start with the first row $(0, 0, 0, 0)$ and get $(-1, -1, -1, -1)$ by replacing $0$'s by $-1$'s.
Then we add the second row and get $(1, 1, 1, 1)$ which stays the same.
Adding $(0, 1, 0, 1)$ to the previous two vectors, we get $(0, 1, 0, 1)$ and $(1, 0, 1, 0)$ which correspond to $(-1, 1, -1, 1)$ and $(1, -1, 1, -1)$.
Next we add the last row $(0, 0, 1, 1)$ to all the previous vectors and get $(0, 0, 1, 1)$, $ (1, 1, 0, 0)$, $(0, 1, 1, 0)$, $(1, 0, 0, 1)$ which correspond to $(-1, -1, 1, 1)$, $(1, 1, -1, -1)$, $(-1, 1, 1, -1)$, $(1, -1, -1, 1)$.
Therefore the first half of the resulting matrix is
\vspace{-2mm}
$$
    \begin{pmatrix}
        -1 & 1 & -1 & 1 & -1 & 1 & -1 & 1\\
        -1 & 1 & 1 & -1 & -1 & 1 & 1 & -1\\
        -1 & 1 & -1 & 1 & 1 & -1 & 1 & -1\\
        -1 & 1 & 1 & -1 & 1 & -1 & -1 & 1
    \end{pmatrix}^T
$$
\vspace{-3mm}\\
Finally we add the first row $(0, 0, 0, 0)$ to all the previous $8$ vectors which gives us the same $8$ vectors in the same order.

Now let us consider the rightmost block of $S(4)_3$.
We start with the first row $(0, 0)$ corresponding to $(-1, -1)$.
Then we add the second row, which is $(0, 0)$, to get $(0, 0)$ corresponding to $(-1, -1)$.
After this we add $(1, 1)$ to the previous vectors and get $(1, 1), (1, 1)$ which remain unchanged under replacing $0$'s by $-1$'s.
We continue by adding $(0, 1)$ to the previous four vectors which gives $(0, 1), (0, 1), (1, 0), (1, 0)$ corresponding to $(-1, 1), (-1, 1), (1, -1), (1, -1)$.
Therefore the first half of the resulting matrix is
\vspace{-2.5mm}
\[
\begin{pmatrix}
    -1 & -1 & 1 & 1 & -1 & -1 & 1 & 1\\
    -1 & -1 & 1 & 1 & 1 & 1 & -1 & -1
\end{pmatrix}^T.
\]
\vspace{-4mm}\\
Then we add the first row $(0, 0)$ to all the previous $8$ vectors and get the same $8$ vectors again.
Putting all three blocks together we get an enumeration of all codewords where we have replaced $0$'s by $-1$'s.
\end{example}
The next theorem formalizes the idea of the previous example and describes the main step for the decoding of $k$-partial simplex codes with $k>1$.

\begin{theorem}
\label{thm_enumeration_partial_simplex_codes}
    Enumerating all the codewords of $\mathcal{S}(\delta+k)_k$ in the same order as done for $RM(1,m)$ in Theorem \ref{order}, setting $m = k + \delta - 1$ and replacing $0$'s by $-1$'s, we get the matrix
     \begin{align*}
        \tilde{H}_{2^m} = 
        \begin{pmatrix}
            H_{2^m} & \hat{H}_{2^{m-1}} & \cdots & \hat{H}_{2^\delta} \\
            -H_{2^m} & \hat{H}_{2^{m-1}} & \cdots & \hat{H}_{2^\delta}
        \end{pmatrix}\quad\text{with}\quad
      \hat{H}_{2^{m-l}} = T\cdot 
       \begin{pmatrix}
        -H_{2^{k + \delta - l-1}}\\
        \vdots\\
        -H_{2^{k + \delta - l-1}}\\
        H_{2^{k + \delta - l-1}}\\
        \vdots\\
        H_{2^{k + \delta - l-1}}
    \end{pmatrix}  
   \end{align*}
       where $-H_{2^{k + \delta - l - 1}}$ and $H_{2^{k + \delta - l - 1}}$ occur $2^{l-1}$ times and $T$ is given by the following permutation

\begin{align*}T \cdot \begin{pmatrix}
    x_0\\
    x_1\\
    \vdots\\
    x_{2^{k + \delta - 1} - 2}\\
    x_{2^{k + \delta - 1} - 1}
\end{pmatrix} = 
\begin{pmatrix}
    x_0\\
    x_{2^{k + \delta - l - 1}}\\
    x_{2 \cdot 2^{k + \delta - l - 1}}\\
    x_{3 \cdot 2^{k + \delta - l - 1}}\\
    \vdots\\
   x_{2^{k + \delta -1} - 2^{k + \delta - l - 1}}\\
    x_1\\
    x_{1 + 2^{k + \delta - l - 1}}\\
    x_{1 + 2 \cdot 2^{k + \delta - l - 1}}\\
    x_{1 + 3 \cdot 2^{k + \delta - l - 1}}\\
    \vdots\\
    x_{2^{k + \delta - 1} - 2^{k + \delta - l - 1} + 1}\\
    \vdots
\end{pmatrix}
\end{align*}
\end{theorem}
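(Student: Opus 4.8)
The plan is to decouple the problem across the $k$ block columns of $S(\delta+k)_k$. Writing $g_1,\dots,g_{m+1}$ for its rows and indexing the block columns by $l=0,\dots,k-1$ from left to right, the restriction of $g_i$ to block $l$ is the zero vector for $i\le l$ and equals the $(i-l)$-th row of $R(m-l)$ for $i>l$. Since the enumeration of Theorem \ref{order} expresses every codeword as a fixed $\mathbb F_2$-linear combination of $g_1,\dots,g_{m+1}$ taken in a prescribed order, and the substitution $0\mapsto-1$ acts entrywise only at the end, the restriction of the enumerated list to block $l$ is completely determined by the restricted generators of that block. It therefore suffices to identify, for each $l$, the $2^m\times 2^{m-l}$ matrix obtained by restricting the enumeration to block $l$, and then to concatenate the blocks.

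For $l=0$ the restricted generators are exactly the rows of $R(m)$ in their original order, so Theorem \ref{order} applies verbatim and the first column band of $\tilde H_{2^m}$ is $\left(\begin{smallmatrix}H_{2^m}\\-H_{2^m}\end{smallmatrix}\right)$. For $l\ge1$ the first $l$ generators, including $g_1$, vanish on block $l$. I would first isolate a duplication lemma: appending a generator that is zero on the current block leaves every entry of the list unchanged and simply concatenates the current list with an exact copy of itself. Applied to $g_1,\dots,g_l$, this shows that after the first $l$ steps the restricted list is $2^{l-1}$ identical copies of the zero vector, after which adding $g_{l+1},\dots,g_{m+1}$, i.e. the rows $\rho_1=\mathbf 1,\rho_2,\dots,\rho_{m-l+1}$ of $R(m-l)$, runs the ordinary enumeration of $R(m-l)$ simultaneously on all $2^{l-1}$ copies.

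The core step is to read off the resulting order. Indexing the first half of the full enumeration by $j=0,\dots,2^m-1$ with $\epsilon_i(j)$ the $(i-2)$-th binary digit of $j$, the restricted codeword is $c_j|_l=\sum_{t=1}^{m-l+1}\beta_t\,\rho_t$, where $\beta_t$ is the $(t-1)$-th digit of $w=\lfloor j/2^{l-1}\rfloor$; in particular $c_j|_l$ depends only on $w$ and is constant on each run of $2^{l-1}$ consecutive indices. Writing $w=\beta_1+2w'$ and applying Theorem \ref{order} to $R(m-l)$, the coefficient $\beta_1$ of $\rho_1=\mathbf 1$ decides, after $0\mapsto-1$, whether $c_j|_l$ is a row of $H_{2^{m-l}}$ (for $w$ odd) or of $-H_{2^{m-l}}$ (for $w$ even), while $w'=\lfloor w/2\rfloor$ picks out the row. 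Thus block $l$ lists the rows of $-H_{2^{m-l}}$ and $H_{2^{m-l}}$ in the interleaved, $2^{l-1}$-fold repeated order governed by $w$, which is exactly the effect of the permutation $T$ on the copy-major stack $\left(-H_{2^{m-l}},\dots,-H_{2^{m-l}},H_{2^{m-l}},\dots,H_{2^{m-l}}\right)^{\!\top}$: $T$ is the bookkeeping map carrying the stacked (copy-major) indexing to the row-major interleaved indexing produced above. Finally, since $g_1$ vanishes on every block $l\ge1$, adding it in the second half of the enumeration reproduces the first half unchanged, so block $l$ contributes the same $\hat H_{2^{m-l}}$ to both row bands, yielding the claimed shape of $\tilde H_{2^m}$.

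The step I expect to be most delicate is this last identification: pinning down $T$ so that the copy index and the within-copy row index stay consistent through the two reductions $w=\lfloor j/2^{l-1}\rfloor$ and $w'=\lfloor w/2\rfloor$, and matching them to the explicit stride-$2^{m-l}$ pattern in the statement. A clean alternative, mirroring the induction in the proof of Theorem \ref{order}, is to prove the block-$l$ claim by induction on $m-l$ using the recursion \eqref{Grec} together with $H_{2^{p+1}}=\left(\begin{smallmatrix}H_{2^p}&H_{2^p}\\H_{2^p}&-H_{2^p}\end{smallmatrix}\right)$; this replaces the explicit digit arithmetic by a recursive check that $T$ is compatible with one further doubling of the code length.
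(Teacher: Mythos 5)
Your proposal is correct and takes essentially the same route as the paper: decompose $S(\delta+k)_k$ into its column blocks, observe that the $l$ leading zero rows of block $l$ merely duplicate the restricted list into $2^{l-1}$ copies of the zero vector, and identify the remaining steps with the enumeration of $R(m-l)$, producing the rows of $-H_{2^{m-l}}$ and $H_{2^{m-l}}$ interleaved in runs of length $2^{l-1}$ and repeated identically in both halves because $g_1$ vanishes on block $l$. The only substantive difference is presentational: where the paper establishes the interleaving pattern by induction on the row being added, you pin it down (and hence the permutation $T$) by explicit binary-digit bookkeeping on the enumeration index $j$ via $w=\lfloor j/2^{l-1}\rfloor$ and $w'=\lfloor w/2\rfloor$, which arguably makes the identification of $T$ more explicit than the paper's.
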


\begin{proof}
    Using the same procedure as for $RM(1,\delta+k-1)$ in Theorem \ref{order}, the $2^{\delta + k - 1}$ left-most columns of the codewords of $S(\delta+k)_k$ correspond to $\begin{pmatrix}
        H_{2^{\delta + k - 1}}\\
        - H_{2^{\delta + k - 1}}
    \end{pmatrix}$.
Now we will investigate the remaining columns of the matrix blockwise, i.e., for $l=1,\hdots, k-1$, we consider the block
\[
    \begin{pmatrix}
        0_{l\times 2^{k+\delta-1-l}}\\
        R(k + \delta - l - 1)
    \end{pmatrix}\in\mathbb F_2^{(k+\delta)\times 2^{k+\delta-1-l}},
\]
where the first $l$ rows are all zero.
If we do the same procedure as we did on $R( k + \delta - 1)$, we get the following. 

We start with the first row, which is $(0, \ldots, 0)$.
 For $i \in \lbrace 2\ldots, l \rbrace$, adding row $i$ to all $2^{i-2}$ previously enumerated codewords gives $2^{l-1}$ times the zero vector, which corresponds to $(-1, \ldots, -1)$.
Next with row $l + 1$ we add $(1, \ldots, 1)$ to all previous codewords, which gives $2^{l-1}$ times $(1, \ldots, 1)$.

We prove inductively that for $j=1,\hdots,k+\delta-l$ after adding row $l+j$ to the previously enumerated rows, we have enumerated the first $2^{j-1}$ rows of $- H_{2^{k + \delta - l - 1}}$ interleaved with the first $2^{j-1}$ rows of $H_{2^{k + \delta - l - 1}}$ in blocks of length $2^{l-1}$.
There will be a block of $2^{l-1}$ times a row of $- H_{2^{k + \delta - l - 1}}$. then a block of $2^{l-1}$ times the same row of $H_{2^{k + \delta - l - 1}}$.
Then it continues with the next row of $- H_{2^{k + \delta - l - 1}}$.
We already showed the statement for $j = 1$.
Adding row $j$ of $R(k + \delta - l - 1)$ to all previous codewords we get again blocks of length $2^{l-1}$.
Since the blocks are in the same order apart from the interleaving the inductive step follows.
Adding the first row of the block, i.e. $(0, \ldots, 0)$, to all previously obtained codewords in the same order, we obtain again $\hat{H}_{2^{k + \delta - l - 1}}$. This is the reason why $\tilde{H}_{2^m}$ contains this matrix twice.
The matrix $\hat{H}_{2^{k + \delta - l - 1}}$ obtained in this way is a shuffle of the matrix

\[
    \begin{pmatrix}
        -H_{2^{k + \delta - l-1}}\\
        \vdots\\
        -H_{2^{k + \delta - l-1}}\\
        H_{2^{k + \delta - l-1}}\\
        \vdots\\
        H_{2^{k + \delta - l-1}}
    \end{pmatrix}  
\]

with $2^{l-1}$ times $-H_{2^{k + \delta - l - 1}}$ and $2^{l-1}$ times $H_{2^{k + \delta - l - 1}}$.
This shuffle is given by the permutation $T$ from the theorem.
\end{proof}

Theorem \ref{thm_enumeration_partial_simplex_codes} provides a way to decode partial simplex codes efficiently
as explained in the following.
Set $m=\delta + k - 1$.
We split the received word $r$ into parts $(r^{(m)}, \ldots, r^{(\delta)})$ where $r^{(i)}$ has length $2^i$.
Note that forming the inner product of $r$ with all codewords of $\mathcal{S}(\delta+k)_k$ results in the multiplication
\begin{align*}
    \tilde{H}_{2^m} \cdot r^T \hspace{-1mm} &= 
    \begin{pmatrix}
        H_{2^m} & \hat{H}_{2^{m-1}} & \cdots & \hat{H}_{2^\delta} \\
        -H_{2^m} & \hat{H}_{2^{m-1}} & \cdots & \hat{H}_{2^\delta}
    \end{pmatrix}
    \cdot (r^{(m)}, \ldots, r^{(\delta)})^T\\
   &  = 
    \begin{pmatrix}
        H_{2^m} \cdot r^{(m), T} + \hat{H}_{2^{m-1}} \cdot r^{(m-1), T} + \cdots + \hat{H}_{2^\delta} \cdot r^{(\delta), T}\\
        -H_{2^m} \cdot r^{(m), T} + \hat{H}_{2^{m-1}} \cdot r^{(m-1), T} + \cdots + \hat{H}_{2^\delta} \cdot r^{(\delta), T}
    \end{pmatrix}.
\end{align*}
With Theorem \ref{thm_enumeration_partial_simplex_codes} we see that by copying the results from the multiplications and using the permutation $T$, it is enough for the comparison of
$r$ with all codewords of $\mathcal{S}(\delta+k)_k$ to compute $H_{2^i} \cdot r^{(i), T}$ for each $i \in \lbrace \delta, \ldots, \delta + k - 1 \rbrace$ with the fast Hadamard multiplication.
The overall complexity is thus
$$
    \sum_{i = \delta}^m 2^i\cdot  i \leq 2^\delta \sum_{i = 0}^{m - \delta} 2^i\cdot m = 2^\delta\cdot  2^k \cdot (\delta + k - 1) \leq 2 n \log(n).
   $$
Therefore we can conclude:
\begin{theorem}
\label{thm_decoding_k_partial_simplex}
    For a $k$-partial simplex code $\mathcal{S}(\delta+k)_k$ one can compute the distance between a received word and all codewords with complexity
   $O(n \log(n)),$
      where $n = 2^{\delta + k} - 2^\delta$.
    In particular, decoding can be done with complexity $O(n \log(n))$.
\end{theorem}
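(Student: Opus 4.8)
The plan is to reduce the distance computation to a single matrix--vector product and then exploit the block structure of $\tilde{H}_{2^m}$ supplied by Theorem \ref{thm_enumeration_partial_simplex_codes} together with the fast Hadamard transform from Section \ref{sec:DecodingofHadamard}. First I would invoke the identity \eqref{eq_distance_inner_product}: once the codewords are written with entries in $\{-1,1\}$, computing the Hamming distance from the received word $r$ to every codeword is, up to the affine rescaling $d=(n-r\cdot c)/2$, the same as computing all inner products $r\cdot c$. Since Theorem \ref{thm_enumeration_partial_simplex_codes} realizes the codewords of $\mathcal{S}(\delta+k)_k$ (after replacing $0$'s by $-1$'s) as the rows of $\tilde{H}_{2^m}$ with $m=\delta+k-1$, this amounts to the single product $\tilde{H}_{2^m}\cdot r^T$.

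Next I would use the block structure of $\tilde{H}_{2^m}$ to split this product. Partitioning $r=(r^{(m)},\ldots,r^{(\delta)})$ into pieces $r^{(i)}$ of length $2^i$, one matched to each column block, a block-by-block multiplication shows that each of the two halves of $\tilde{H}_{2^m}\cdot r^T$ is a sum $\pm H_{2^m}\cdot r^{(m),T}+\sum_{l=1}^{k-1}\hat{H}_{2^{m-l}}\cdot r^{(m-l),T}$. By Theorem \ref{thm_enumeration_partial_simplex_codes}, each block $\hat{H}_{2^{i}}$ (with $i=m-l$) is, up to the fixed permutation $T$ and sign changes, the matrix $H_{2^{i}}$ stacked $2^{l}$ times; hence $\hat{H}_{2^{i}}\cdot r^{(i),T}$ is recovered from the single Hadamard product $H_{2^{i}}\cdot r^{(i),T}$ merely by copying, permuting and flipping signs. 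It therefore suffices to compute one fast Hadamard multiplication $H_{2^i}\cdot r^{(i),T}$ for each $i\in\{\delta,\ldots,\delta+k-1\}$ and then reassemble the full vector.

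For the cost I would recall from Section \ref{sec:DecodingofHadamard} that a single fast Hadamard multiplication of size $2^i$ takes $2^i\cdot i$ operations. Summing over the $k$ required sizes gives
\[
    \sum_{i=\delta}^{m}2^i\cdot i \;\leq\; 2^\delta\sum_{i=0}^{m-\delta}2^i\cdot m \;=\; 2^\delta\cdot 2^k\cdot(\delta+k-1) \;\leq\; 2\,n\log(n),
\]
using $n=2^{\delta+k}-2^\delta\geq 2^{\delta+k-1}$ and $\log(n)$ of order $\delta+k$, which yields the claimed $O(n\log(n))$ bound for the distance computation. The ``in particular'' clause then follows because choosing the entry of $\tilde{H}_{2^m}\cdot r^T$ of maximal value, and with it the closest codeword, costs only a linear scan over the $2n$ entries.

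The step I expect to require the most care is the reassembly in the second paragraph: one must verify that applying the permutation $T$, performing the sign changes and duplicating the $2^{l}$ identical sub-blocks to reconstruct each $\hat{H}_{2^{i}}\cdot r^{(i),T}$, and then stacking these into $\tilde{H}_{2^m}\cdot r^T$, can all be done in a number of operations that is subsumed by the $O(n\log(n))$ budget. Under the paper's convention that permutations are constant time and since copying is linear in the output size, this overhead is $O(n)$ and hence does not affect the asymptotics; the genuine content of the bound lives entirely in the $k$ fast Hadamard transforms.
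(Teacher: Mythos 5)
Your proposal is correct and follows essentially the same route as the paper: reduce distances to inner products via \eqref{eq_distance_inner_product}, exploit the block structure of $\tilde{H}_{2^m}$ from Theorem \ref{thm_enumeration_partial_simplex_codes} to replace the full product by $k$ fast Hadamard multiplications $H_{2^i}\cdot r^{(i),T}$, and bound $\sum_{i=\delta}^m 2^i\cdot i\leq 2n\log(n)$. The reassembly step you flag as delicate is handled in the paper exactly as you describe, by copying and applying the permutation $T$ at linear cost.
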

 
\section{Improved Viterbi algorithm for a class of optimal convolutional codes}\label{sec:NewDecodingAlgorithm}

In this section we present a new algorithm capable of decoding $k$-partial simplex convolutional codes with lower complexity than the Viterbi algorithm.
We assume that we have replaced the $0$-entries of a received word by $-1$'s, i.e. $r = (r_0, \ldots, r_N) \in \lbrace -1, 1 \rbrace^{n \cdot (N+1)}$.
On the edges in the trellis between time $t-1$ and $t$, for each $t \in \lbrace \mu + 1, \ldots, N + \mu \rbrace$, there occurs each codeword of the $k$-partial simplex code exactly once.
Therefore the crucial task is to compare the distances between all the codewords of the $k$-partial simplex code and $r_{t-1}$.
We will use Theorem \ref{thm_decoding_k_partial_simplex} to do this efficiently.
The decoding process begins with precomputing the permutation $P$ for Hadamard multiplication.
Furthermore, we compute a permutation $Q$ as described in the next paragraph which allows us to go linearly through the stored distances in 1) in Step 2 of Algorithm \ref{alg:cap}.
This implies that the lookup happens with complexity $O(1)$. 

We built the trellis such that the states and the tuples $(u_{t-1},\hdots,\tilde{u}_{t-1-\mu})$ corresponding to the $2^{\mu+1}$ branches between the states $S_{t-1}$ and $S_t$ in the trellis are in lexicographical order starting with the smallest (see Example \ref{exviterbi}).
Moreover, we obtained the rows of $\tilde{H}_{2^m}$ by forming linear combinations of the rows of $S(\delta+k)_k$, i.e. by calculating $(u_{t-1},\hdots,\tilde{u}_{t-1-\mu})\cdot S(\delta+k)_k$ for all $(u_{t-1},\hdots,\tilde{u}_{t-1-\mu})\in\mathbb F_2^{\delta+k}$ in a fixed order.
Hence, if row $i$ of $\tilde{H}_{2^m}$ is the sum of the rows $j_1,\hdots,j_a$ of $S(\delta+k)_k$, then row $i$ of $Q$ contains the standard basis vector $e_j$, where $j$ corresponds to the lexicographical order of $e_{j_1}+\cdots+e_{j_a}$ in $\mathbb F_2^{\delta+k}$, i.e., for $1\leq i\leq 2^m$, $j=\sum_{w=1}^a 2^{m+1-j_w}+1$  and for $2^m+1\leq i\leq 2^{m+1}$, $j=\tilde{j}-2^m$ if $e_{\tilde{j}}$ is contained in row $i-2^m$ of $Q$.

Once this information is built up, the algorithm is ready to initialize the trellis and recreate the sequence of bits that were input to the convolutional encoder when the message was encoded for transmission.
This is accomplished by the steps given in Algorithm \ref{alg:cap}.

\begin{algorithm}
\caption{Improved Viterbi Algorithm}\label{alg:cap}
Let $\mathcal{C}$ be a $k$-partial simplex convolutional code with minimal generator matrix $G(z)$, memory $\mu$, received message $r = (r_0, \ldots, r_N) \in \lbrace -1, 1 \rbrace^{n \cdot (N+1)}$ to be decoded, input length $N$.
Precompute the permutation $Q$ and the permutation $P$ for fast Hadamard multiplication and set up the trellis.\\
{\textbf{Step 1:} Perform Step 1 of the Algorithm \ref{Viterbi}.}\\
\textbf{Step 2:}
\begin{algorithmic}
\For {$t$ from $\mu +1$ to  $N-\mu+1$}\State \multiline{Compute $Q \cdot \tilde{H}_{2^m}\cdot r_{t-1}^T$ and use Equation \eqref{eq_distance_inner_product} to compute the distances between $r_{t-1}$ and the codewords of $\mathcal{S}(\delta+k)_k$.
Save the result and 
for each state and the corresponding $2^k$ incoming paths do the following:
}
\State \multiline{
1) Add the branch metric $d(r_{t-1},c_{t-1})$ entering the state to the partial path metric of the corresponding survivor at $t-1$.
Do this by going linearly through the stored distances.\\
2) Compare the partial path metrics of all $2^{k}$ paths entering each state.\\
3) For each state, save the path with the smallest partial path metric (i.e. the survivor) and its metric $d(r_{[0,t-1]},c_{[0,t-1]})$.
If there are multiple such paths take one randomly.
Eliminate all other paths in the trellis.
}
\EndFor
\end{algorithmic}
\textbf{Step 3:} For $t$ from $N-\mu+2$ to $N+1$, do the steps of the for loop in Algorithm \ref{Viterbi}.\\
\textbf{Step 4:} Return the single survivor path and its metric.
\end{algorithm}

\begin{example}
We take the same generator matrix, codeword and received word as in Example \ref{exviterbi} to illustrate Algorithm 2. Step 1 ($t=2$) stays the same as in Algorithm 1.
Since $m=\delta+k-1=2$, for step $t=3$, 
after replacing $0$'s by $-1$'s, we obtain $r_2=(-1\ 1\ -1\ -1)$ and
\vspace{-1mm}
\begin{align*}
H_4 \cdot r_2 = (C \cdot  P)^2\cdot r_2 = (-2, -2, 2, -2)^T.
\end{align*}
\vspace{-4mm}\\
We get 
$Q\cdot\begin{pmatrix}
    H_4\\ -H_4
\end{pmatrix}\cdot r_2=\begin{pmatrix}2& \hspace{-0.2cm} -2 \hspace{-0.2cm}&\hspace{-0.2cm} 2& \hspace{-0.2cm} 2 \hspace{-0.2cm}& \hspace{-0.2cm} -2 \hspace{-0.2cm}& \hspace{-0.2cm} 2 \hspace{-0.2cm}& \hspace{-0.2cm} -2 \hspace{-0.2cm}&  \hspace{-0.2cm}-2\end{pmatrix}^\top
$
with
$$Q=\begin{pmatrix}
    e_5^\top & \hspace{-0.2cm}
     e_7^\top & \hspace{-0.2cm}
    e_6^\top& \hspace{-0.2cm}
     e_8^\top& \hspace{-0.2cm}
     e_1^\top& \hspace{-0.2cm}
     e_3^\top& \hspace{-0.2cm}
   e_2^\top& \hspace{-0.2cm}
     e_4^\top\hspace{-0.2cm}
\end{pmatrix}^\top.$$
Using Equation \eqref{eq_distance_inner_product} we obtain $\begin{pmatrix} 1 \hspace{-0.2cm}& \hspace{-0.2cm} 3 \hspace{-0.2cm}& \hspace{-0.2cm} 1 \hspace{-0.2cm}& \hspace{-0.2cm} 1 \hspace{-0.2cm}& \hspace{-0.2cm} 3 \hspace{-0.2cm}& \hspace{-0.2cm}1 \hspace{-0.2cm} & \hspace{-0.2cm}3 \hspace{-0.2cm}& \hspace{-0.2cm}3 \hspace{-0.2cm}\end{pmatrix}^\top$ as summands for the path metrics corresponding to the bold values in Example \ref{exviterbi}.
\end{example}

\section{Complexity Analysis}\label{sec:ComplexityAnalysis}
In this section we give the complexity analysis of Algorithm \ref{alg:cap}.
We assume that arithmetic operations and in particular additions, subtractions, multiplications and comparisons are constant time, i.e., $O(1)$.
As long as the parameters $n$ and $N$ are not too large the assumption that arithmetic operations are $O(1)$ reflects the CPU time more accurately than the bit complexity.
We show that under this assumption our algorithm outperforms the classical Viterbi algorithm.

Recall that the fast Hadamard transform has complexity $n \log(n)$.
We only consider the complexity for a fixed time in $\mu+1, \ldots, N-\mu+1$.
As long as $N$ is large in comparison to $\mu$ the final complexity is essentially $N$ times the complexity $C$ we get for one timestep, i.e., 
$N \cdot C.$
First we use Theorem \ref{thm_decoding_k_partial_simplex} to compute the distances between the codewordes of $\mathcal{S}(\delta + k)_k$ and $r_{t-1}$ together with a permutation and get a complexity of $O(n \log(n))$.
Next we have a for loop over all states in the order we have described above.
So we get $2^\delta$ times the cost $C_S$ for each iteration of the for loop over the states, i.e., $C = O(n \log(n) + 2^\delta \cdot C_S)$.
For the cost $C_S$ of a single loop we have to consider all the $2^k$ incoming edges.
Then we look at the paths which end in the starting state of the edge and compute the sum of path metric and hamming weight corresponding to the edge.
Thanks to the ordering we can go linearly over everything, so the cost is $2^k$.
Then we store the new path with its new metric.
Thus, the overall cost at a fixed time is
$C = O(n \log(n) + 2^\delta C_S) = O(n \log(n) + 2^\delta \cdot 2^k) = O(n \cdot \log(n))$.
Finally, including the time steps we get:
\begin{theorem}
The complexity of Algorithm \ref{alg:cap} for $k$-partial simplex convolutional codes is
    $O(N \cdot n \log(n)).$
\end{theorem}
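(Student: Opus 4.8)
The plan is to bound the total running time as a sum over the $N$ relevant time steps of the cost $C$ incurred at a single time step, and then to show that this per-step cost is $O(n\log(n))$. First I would observe that, since we assume $N$ is large compared to the memory $\mu$, Steps 1 and 3 of Algorithm \ref{alg:cap} (which only touch $O(\mu)$ time units at the boundaries) contribute lower-order terms, so that the dominant contribution comes from the for loop in Step 2 running over $t$ from $\mu+1$ to $N-\mu+1$. This yields a total cost of the form $N\cdot C$, and it remains to analyze a single time step.

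Next I would split the per-step cost $C$ into its two components. The first is the computation of $Q\cdot\tilde H_{2^m}\cdot r_{t-1}^\top$ together with the conversion via Equation \eqref{eq_distance_inner_product} into the Hamming distances between $r_{t-1}$ and all codewords of $\mathcal S(\delta+k)_k$. By Theorem \ref{thm_decoding_k_partial_simplex} this costs $O(n\log(n))$. The second component is the loop over the $2^\delta$ states, each of which must process its $2^k$ incoming edges; writing $C_S$ for the work per state, we obtain $C=O(n\log(n)+2^\delta\,C_S)$.

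Then I would argue that $C_S=O(2^k)$. This is the step that requires the most care, and it is the main obstacle of the proof: the precomputed permutation $Q$ together with the chosen lexicographical ordering of the states and of the branches lets the algorithm read off the relevant stored distances sequentially, so that adding the branch metric to the corresponding survivor metric, comparing the $2^k$ candidate partial path metrics, and storing the new survivor all happen in constant time per edge. Hence $C_S=O(2^k)$ and $2^\delta\,C_S=O(2^{\delta+k})$. It is precisely this $O(1)$-per-edge lookup that prevents a naive re-enumeration of all $2^{\delta+k}$ codewords from reappearing inside the inner loop, and it rests entirely on the correctness of the ordering argument and of the permutation $Q$ set up in the discussion preceding the algorithm.

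Finally I would collect the estimates. Using $n=2^{\delta+k}-2^\delta=2^\delta(2^k-1)$, one has $2^{\delta+k}=\frac{2^k}{2^k-1}\,n=O(n)$, so the bookkeeping term $2^\delta\,C_S$ is dominated by the Hadamard term, giving $C=O(n\log(n))$ for a single time step. Multiplying by the $N$ time steps then yields the claimed bound $O(N\cdot n\log(n))$.
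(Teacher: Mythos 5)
Your proposal is correct and follows essentially the same route as the paper: decompose the per-time-step cost into the $O(n\log(n))$ fast Hadamard/distance computation from Theorem \ref{thm_decoding_k_partial_simplex} plus a $2^\delta\cdot 2^k$ bookkeeping term made constant-per-edge by the ordering and the permutation $Q$, then multiply by $N$. Your added remarks (that Steps 1 and 3 are lower-order boundary terms, and that $2^{\delta+k}=O(n)$) only make explicit what the paper leaves implicit.
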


Compared to this the Viterbi algorithm has complexity $O(N \cdot 2^\delta \cdot 2^k \cdot n) = O(N \cdot n^2)$.
The factor $n$ comes from the comparison of $n$ bits between $r_{t-1}$ and a codeword of the underlying blockcode which we can avoid 
using the fast Hadamard multiplication.
Also note that the complexity for sequential decoding, that was introduced by Wozencraft in \cite{wozencraft}, is at best $O(N\cdot n)$, which is almost achieved by Algorithm \ref{alg:cap}. Furthermore sequential decoding is suboptimal, i.e., it not always returns the closest codeword.

\section{Conclusion}
We used the fast Hadamard multiplication for the decoding of binary first-order Reed-Muller codes to obtain an improved Viterbi algorithm for partial simplex convolutional codes.
Interesting problems for future work are to generalize the ideas to other classes of codes and to investigate whether it is also possible to significantly improve sequential decoding algorithms.

\section*{Acknowledgment}
This work is supported by the SNSF grant n. 212865, by CIDMA through FCT, \url{https://doi.org/10.54499/UIDB/04106/2020},
\url{https://doi.org/10.54499/UIDP/04106/2020} and by FCT grant UI/BD/151186/2021.

\bibliographystyle{IEEEtran}
\bibliography{literature}

\end{document}